\setlist[enumerate]{leftmargin=2em,itemindent=0em, labelindent=0pt,labelwidth=1.5em,labelsep=.5em, align=left, noitemsep}
\newlist{txtenum}{enumerate}{1}
\setlist[txtenum]{leftmargin=0em,itemindent=1.5em, labelindent=0pt,labelwidth=1em,labelsep=.5em, align=left}
\theoremstyle{plain}
\newtheorem{theorem}{Theorem}
\newtheorem*{theorem*}{Theorem}
\newtheorem*{proposition*}{Proposition}
\newtheorem*{corollary*}{Corollary}
\newtheorem{lemma}[theorem]{Lemma}
\newtheorem*{lemma*}{Lemma}
\newtheorem*{observation*}{Observation}
\newtheorem*{conjecture*}{Conjecture}
\newtheorem*{question*}{Question}
\newtheorem*{questions*}{Questions}
\newtheorem*{problem*}{Problem}
\newtheorem*{problems*}{Problems}
\newtheorem*{openproblem*}{Open Problem}
\theoremstyle{definition}
\newtheorem*{definition*}{Definition}
\newtheorem*{example*}{Example}
\newtheorem*{exercise*}{Exercise}
\newtheorem{remark}[theorem]{Remark}
\newtheorem*{remark*}{Remark}
\newtheorem*{remarks*}{Remarks}
\theoremstyle{remark}
\newtheorem*{claim*}{Claim}
\newcommand{\subclass}[1]{}
\newcommand{\enumTi}[1]{\renewcommand{\theenumi}{#1}}
\newcommand{\alphenumi}{\enumTi{\alph{enumi}}}
\newcommand{\romenumi}{\enumTi{\roman{enumi}}}
\newlength{\hspaceforlengthglumpf}
\renewcommand{\em}{\sl}
\newcommand{\lt}{\left}
\newcommand{\rt}{\right}
\newcommand{\abs}[1]{{\lt\lvert{#1}\rt\rvert}}
\newcommand{\NN}{\mathbb{N}}
\newcommand{\RR}{\mathbb{R}}
\newlength{\algotabbingwidth}
\newcommand{\Sets}{\mathcal S}
\newcommand{\Trees}{\mathcal T}
\newcommand{\Proofs}{\mathcal C}
\newcommand{\Code}[1]{\texttt{\textsc{#1}}}
\definecolor{mygray}{gray}{0.33333}
\newcommand{\No}[1]{\textcolor{mygray}{\tt #1}}
\begin{document}
\title{On the Combinatorial Lower Bound for the Extension Complexity of the Spanning Tree Polytope}
\author{Kaveh Khoshkhah, Dirk Oliver Theis\thanks{Supported by the Estonian Research Council, ETAG (\textit{Eesti Teadusagentuur}), through PUT Exploratory Grant \#620, and by the European Regional Development Fund through the Estonian Center of Excellence in Computer Science, EXCS.}\\[1ex]
  \small Institute of Computer Science {\tiny of the } University of Tartu\\
  \small \"Ulikooli 17, 51014 Tartu, Estonia\\
  \small \texttt{\{kaveh.khoshkhah,dotheis\}@ut.ee}%
}

\date{Sat Jan 21 15:17:26 EET 2017}
\maketitle

\begin{abstract}
  In the study of extensions of polytopes of combinatorial optimization problems, a notorious open question is that for the size of the smallest extended formulation of the Minimum Spanning Tree problem on a complete graph with~$n$ nodes.  The best known lower bound is the trival (dimension) bound, $\Omega(n^2)$, the best known upper bound is the extended formulation by Wong (1980) of size $O(n^3)$ (also Martin, 1991).

  In this note we give a nondeterministic communication protocol with cost $\log_2(n^2\log n)+O(1)$ for the support of the spanning tree slack matrix.
  This means that the combinatorial lower bounds can improve the trivial lower bound only by a factor of (at most) $O(\log n)$.

  \par\medskip%
  \textbf{Keywords:}  Polyhedral Combinatorial Optimization, Extension Complexity, Communication Complexity; Spanning Tree polytope.
\end{abstract}

\section{Introduction}\label{sec:intro}
The Spanning Tree polytope, $P_n$, has as its vertices the characteristic vectors in $\RR^{\binom{[n]}{2}}$ of edge-sets of trees with node set $[n]:=\{1,\dots,n\}$ (we use binomial coefficient notation for sets of subsets).  A complete system of inequalities and equations was given by Edmonds~\cite{Edmonds:submodular:1970}:
\begin{subequations}
  \begin{align}
    \sum_{e\in\binom{[n]}{2}} x_e &=   n-1           &&\label{eq:Edmonds:eqn}\\
    \sum_{e\in\binom{S}{2}} x_e   &\le \abs{S}-1     &&\forall S\subset [n],\ \abs{S}>1 \label{eq:Edmonds:cyc}\\
    x_e                           &\ge 0             &&\forall e\in\binom{[n]}{2}. \label{eq:Edmonds:nng}
  \end{align}
\end{subequations}
This system has exponentially many facet-defining inequalities.  There is a classical extended formulation by Wong/Martin~\cite{Wong:tsp:1980,Martin:91} with $O(n^3)$ inequalities (and variables).  A notorious open problem in polyhedral combinatorial optimization, highlighted by M.~Goemans at the 2010 \textit{Carg\`ese Workshop on Combinatorial Optimization,} asks whether or not an extended formulation with $o(n^3)$ inequalities exists.

There has been some progress on sub-trees in specific graph classes instead of the complete graph (see, e.g., \cite{Fiorini-Huynh-Joret-Pashkovich:spanningtree:2017,pashkovich:phd:2012} and the references therein), but there does not seem to be a compelling reason to believe that there exists an extended formulation with $o(n^3)$ inequalities in the setting of the complete graph, as described here.

The only known lower bound is $\Omega(n^2)$ --- a ``trivial'' lower bound (it is the dimension of the spanning tree polytope, $P_n$).

The smallest number of inequalities in an extended formulation is called the extension complexity.  More formally and generally, let~$P \subset \RR^d$ be a polytope.  A polytope~$Q \subset \RR^{e}$ is called an \textit{extension of~$P$}, if there exists a projective mapping $\pi\colon\RR^e\to\RR^d$ which maps $P$ \textsl{onto}~$Q$.  This allows to reduce linear programming over~$P$ to linear programming over~$Q$.  The \textit{size of the extension} is the number of facets of~$Q$, and the \textit{extension complexity}~\cite{Fiorini-Kaibel-Pashkovich-Theis:CombLB:13} of~$P$ is the smallest size of an extension of~$P$.

There are links between extension complexity and communication complexity, a fact which has been observed and used by Yannakakis~\cite{Yannakakis:91}, and recently strengthened by Faenza et al.~\cite{Faenza-Fiorini-Grappe-Tiwary:nngrk-rCC:2012}.  One of these links is the following.  Denoting by $F(P)$ and $V(P)$ the set of facets and vertices, respectively, of the polytope~$P$, let $f_P \colon F(P)\times V(P) \to \{0,1\}$ be the boolean function which maps a pair of a facet and a vertex to~$0$, if the vertex lies on the facet, and to~$1$ otherwise.  Then the nondeterministic communication complexity of $f_P$ is a lower bound for the binary logarithm of the extension complexity of~$P$~\cite{Yannakakis:91}.  Nondeterministic communication complexity can be defined as the binary logarithm of the so-called \textit{rectangle covering number,} a combinatorial concept, but in this paper, we stick to the terminology of communication complexity.\footnote{We do that for employment purposes: Mainly of Alice \& Bob (who would otherwise be out of work), but also of the authors (because communication complexity is currently so much easier to sell than combinatorics).}

Lower bounds based on nondeterministic communication complexity have been successful for several families of polytopes of combinatorial optimization problems, e.g., the Bipartite Matching polytopes, Traveling Salesman polytopes, Cut polytopes, Stable Set polytopes (see \cite{Fiorini-Massar-Pokutta-Tiwary-Dewolf:ACM:15} for more examples).

For Spanning Tree polytopes, we can disregard the $O(n^2)$ nonnegativity inequalities~\eqref{eq:Edmonds:nng} (see next section).
Defining
\begin{align*}
  \Sets  &:= \Bigl\{ S \subsetneq [n]             \Bigm| \abs{S}>1       \Bigr\} \text{ and} \\
  \Trees &:= \Bigl\{ T \subseteq  \binom{[n]}{2}  \Bigm| ([n],T) \text{ tree } \Bigr\},
\end{align*}
(we use the notation $(V,E)$ for a graph with node set~$V$ and edge set~$E$), the resulting boolean function can be written as
\begin{equation}\label{eq:def-sptree-f}
  f_n\colon \Sets\times \Trees \colon
  (S,T) \mapsto
  \begin{cases}
    1, &
    \begin{aligned}[t]
      &\text{if the sub-forest of~$T$ induced by~$S$,}\\
      &\text{$(S,T\cap \textstyle\binom{S}{2})$, is disconnected;}\\[.5ex]
    \end{aligned}%
    \\[.5ex]
    0, &
      \begin{aligned}[t]
        &\text{if the sub-forest of~$T$ induced by~$S$ is a tree,}\\
        &\text{i.e.,  $(S,T\cap \textstyle\binom{S}{2})$ is connected.}
      \end{aligned}
  \end{cases}
\end{equation}
Indeed, a tree~$T$ is on a facet defined by an inequality of the type~\eqref{eq:Edmonds:cyc}, if and only if that inequality is satisfied with equation when plugging in the characteristic vector~$x^T$ of~$T$ (meaning $x^T_e = 1$ iff $e\in T$, otherwise~$0$), which is the case if and only if the sub-forest of~$T$ induced by~$S$ is a tree.

An $3\log_2 n + O(1)$ upper bound for the nondeterministic communication complexity of the Spanning Tree polytope follows from the existence $O(n^3)$ extended formulation, and a nondeterministic communication protocol with that cost can be readily written down (see next section).

Over the last 6 years, many a fingernail was gnawed when researchers (including the authors) attempted to prove a non-trivial lower bound for the extension complexity of the Spanning Tree polytope through nondeterministic communication complexity.  The binary logarithm of the dimension of any polytope~$P$ is a trivial lower bound to the nondeterministic communication complexity of~$f_P$ (not just to the binary logarithm of the extension complexity).  For the Spanning Tree polytope, this amounts to $2\log_2 n - O(1)$, and nothing better is known.
Weltge~\cite{Weltge:phd:2015} made progress by proving an upper bound of $\frac{8}{3}\log_2 n + \log_2\log_2 n +O(1)$ for a very important \textsl{lower bound} to the nondeterministic communication complexity: the (binary logarithm of the) \textsl{fractional} rectangle covering number.  However, no new upper bound to the nondeterministic communication complexity can be derived from Weltge's result.  Recently, another convenient lower bound to the nondeterministic communication complexity, the so-called fooling-set bound was proved to be useless~\cite{Khoshkhah-Theis:treeFool:2017} (a result that was not so surprising, seeing as the fooling-set bound of a ``typical'' boolean function appears to grow at most slightly faster than the dimension~\cite{Pourmoradnasseri-Theis:foolrk:arXiv}).

In this note, we give an efficient nondeterministic communication protocol for $f_n$, which implies the following upper bound for $f_{P_n}$, the boolean function associated with the Spanning Tree polytope.

\begin{theorem}\label{thm:main}
  The nondeterministic communication complexity of $f_{P_n}$ is $2\log_2 n + \log_2\log_2(n) +O(1)$.
\end{theorem}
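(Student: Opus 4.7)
The plan is to exhibit a nondeterministic protocol for $f_n$ using at most $\log_2(n^2 \log n) + O(1)$ bits, which translates to the claimed bound on $f_{P_n}$ after absorbing the contribution of the $\binom{n}{2}$ nonnegativity facets $x_e \ge 0$. For those, the pair (facet $x_e \ge 0$, tree $T$) is a $1$-entry iff $e \notin T$, and this is covered by the trivial rectangle family indexed by $e\in\binom{[n]}{2}$ (one ``$e$-column'' worth of trees), which contributes only an additive $2\log_2 n+O(1)$ bits.

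The starting point is the standard $O(n^3)$-rectangle cover of the $1$-entries of $f_n$: for every ordered triple $(u,v,w)$ of distinct vertices of $[n]$, the set
\[
R(u,v,w) \;:=\; \bigl\{(S,T)\;:\;u,v\in S,\ w\notin S,\ w\in P_T(u,v)\bigr\}
\]
is a combinatorial rectangle made entirely of $1$-entries (Alice checks $u,v\in S$ and $w\notin S$; Bob checks $w\in P_T(u,v)$), and these rectangles cover every disconnected configuration because disconnection of $(S,T\cap \binom S2)$ is equivalent to the existence of such a triple. This gives the familiar $3\log_2 n+O(1)$ cost.

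The plan for saving a factor of (almost) $n$ is to replace the explicit witness $w$ by a \textbf{level parameter} $\ell\in\{0,1,\dots,\lceil\log_2 n\rceil\}$, so the prover's message is just $(u,v,\ell)\in[n]\times[n]\times\{0,\dots,\lceil\log_2 n\rceil\}$. To this end, I would fix once and for all a \emph{universal} family of subsets $W(u,v,\ell)\subseteq[n]\setminus\{u,v\}$ depending only on the prover's guess, and define the rectangle
\[
R(u,v,\ell) \;:=\; \bigl\{(S,T)\;:\;u,v\in S,\ W(u,v,\ell)\cap S=\emptyset,\ W(u,v,\ell)\cap P_T(u,v)\neq\emptyset\bigr\}.
\]
Since $W(u,v,\ell)$ is oblivious to both $S$ and $T$, Alice can test the two conditions on $S$ and Bob can test the condition on $T$, so this is a genuine rectangle; and any $(S,T)\in R(u,v,\ell)$ admits some $w\in W(u,v,\ell)\cap P_T(u,v)$ with $w\notin S$, which witnesses disconnectedness. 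A natural candidate for $W(u,v,\ell)$ is a dyadic/hierarchical decomposition of $[n]\setminus\{u,v\}$ into blocks whose sizes scale geometrically with $\ell$, so that for any putative separator $w$ on the path exactly one of the levels contains $w$ in a ``small'' block that, by averaging, avoids~$S$.

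The main obstacle is the combinatorial claim behind this cover: for every disconnected $(S,T)$ one must guarantee the existence of at least one triple $(u,v,\ell)$ for which both Alice and Bob accept, i.e.\ the chosen block is simultaneously disjoint from $S$ and meets $P_T(u,v)$. Because $W(u,v,\ell)$ cannot depend on $(S,T)$, the argument must use the prover's flexibility on two fronts at once: the choice of the level $\ell$ (providing $\log\log n$ bits of adaptivity), \textbf{and} the choice of $(u,v)$ among the $\Omega(|S|^2)$ pairs of $S$-vertices lying in distinct components of $T\cap\binom S2$. This amortization of the adversary's freedom in $S$ against the richness of witnessing pairs $(u,v)$, tied to the hierarchical structure of the family $\{W(u,v,\ell)\}$ and the tree-metric geometry of $T$, is the combinatorial heart of the proof and the step I expect to require the most care.
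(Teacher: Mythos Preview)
Your framework matches the paper's exactly: the prover sends $(u,v)$ together with $O(\log\log n)$ further bits that pin down a block $W(u,v,\ell)\subseteq[n]\setminus\{u,v\}$; Alice accepts iff $u,v\in S$ and the whole block avoids $S$; Bob accepts iff the block meets $P_T(u,v)$. In the paper the extra data is a triple $(\pi,\delta,d)$ with $\pi$ selecting the closer of $u,v$ to the hidden separator $t$ (in the linear order on $[n]$), $\delta$ selecting left/right of that endpoint, and $d=\lfloor\log_2(\text{distance})\rfloor$; so the blocks are dyadic annuli around $u$ or around $v$. Your handling of the nonnegativity facets is also the same as the paper's.

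Where you have a genuine gap is precisely the step you flag as the ``combinatorial heart.'' You gesture at an averaging/amortization argument over the $\Omega(|S|^2)$ cross-component pairs, but give no mechanism, and it is not clear one exists in that form. The paper does something different and sharper. It has the prover choose, among \emph{all} witness triples $(u,t,v)$ (so $u,v$ are not fixed in advance), one that minimizes
\[
\mu(u,t,v):=|t-u|+|t-v|,
\]
and then send the dyadic block containing $t$. The key ingredient is a \emph{Triangle Lemma}: if $(u,t,v)$ is a witness and $r\in S$, then at least one of $(r,t,v)$, $(u,t,r)$ is again a witness (a one-line argument rooting $T$ at $t$). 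Now if Alice would reject because some $r\in S$ lies in the same dyadic block as $t$, then $|r-t|<2^d\le|t-u|\le|t-v|$, and whichever of $(r,t,v)$, $(u,t,r)$ is a witness has strictly smaller $\mu$, contradicting minimality. So coverage follows from an \emph{extremal} choice of the full triple, not from a counting argument; the prover's freedom in $(u,v)$ is indeed used, but via a minimality/exchange step rather than the amortization you propose. Without this (or an equivalent) idea, your plan is not yet a proof.
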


\paragraph{The remainder of this paper is organized as follows.}  %
In Section~\ref{sec:basics}, we review some basic definitions and lay the ground on which our nondeterministic communication protocol, described in Section~\ref{sec:prot}, is based.  In Section~\ref{sec:proof}, we prove the correctness of the protocol.  The paper closes with a short discussion, in Section~\ref{sec:conclusion}, of the new status quo on the extension complexity of the Spanning Tree polytope.

\section{The Trivial Bound on the Nondeterministic Communication Complexity of the Spanning Tree Polytope}\label{sec:basics}
In an attempt to make this note accessible to the non-expert in communication complexity, we briefly review the definition of nondeterministic communication complexity.

Let $f\colon X\times Y \to \{0,1\}$ be a boolean function. In Communication Complexity, Alice and Bob are tasked with computing the value $f(x,y)$, when Alice and Bob each know only part of the input: Alice gets~$x$, and Bob gets~$y$.  They have to communication in order to determine~$f(x,y)$.  Full knowledge of~$f$ and unlimited computational power are assumed.  In nondeterministic communication complexity, there is, in addition, a \textit{Prover,} who tries to convince Alice and Bob that the output is~$1$: the Prover will send a \textit{certificate} to Alice and Bob, based on which they must make a decision.

A \textit{nondeterministic communication protocol} consists of a set $\Proofs$ of possible certificates that the Prover can send, together with description of how Alice and Bob react, based on their respective inputs, to the certificate sent by the Prover.  Alice and Bob can communicate (i.e., send/receive bits) with each other (although in the protocols in this paper, they don't, so we hand-wave that part of the definition of a nondeterministic communication protocol).  Ultimately, Alice and Bob each either \textit{accepts} or \textit{rejects} their respective inputs based on the certificate sent by the Prover.

Such a protocol \textit{computes~$f$,} if:
\begin{enumerate}[label=(\roman*),nosep]
\item\label{def:ndCP-correct:1} For each input $(x,y)$ with $f(x,y)=1$, there is a certificate $C\in\Proofs$ such that, if the Prover sends~$C$, then Alice and Bob \textsl{both} accept;
\item\label{def:ndCP-correct:0} For each input $(x,y)$ with $f(x,y)=0$, for every certificate $C\in\Proofs$, if the Prover sends~$C$, then at least one of Alice and Bob rejects.
\end{enumerate}
Informally, the way we talk about the Prover is that his goal is to make Alice and Bob accept.  Knowing Alice's and Bob's parts of the protocol, if the input $(x,y)$ is such that $f(x,y)=1$, he is honest, i.e., he sends a certificate which really proves that $f(x,y)=1$ in a way agreed to between the three parties.  If, however, the input $(x,y)$ is such that $f(x,y)=0$, the Prover has no chance but to lie, and he does so in a way that will fool Alice and Bob into accepting, if that is possible.

The \textit{cost} of a protocol is number of bits sent by Alice and Bob, plus $\log_2(\abs{C})$, the (idealized, since possibly fractional) number of bits sent by the Prover.

The \textit{nondeterministic communication complexity} of a function~$f$ is the smallest cost of a protocol computing~$f$.

\begin{remark}
  Wlog, Alice and Bob do not communicate among themselves: the Prover could simply send the messages they would be exchanging, which they would verify.  In that case, it is easy to see that, for every $C\in\Proofs$, the set of $(x,y)\in X\times Y$ for which Alice and Bob both accept is of the form $K\times L$ --- a \textit{rectangle.}  The protocol is correct, if (1) no such rectangle contains an input $(x,y)$ with $f(x,y)=0$, and (2) every input $(x,y) \in X\times Y$ with $f(x,y)=1$ is contained in one such rectangle.  Hence, the nondeterministic communication complexity is equal to the $\log_2$ of the minimum number of $1$-rectangles needed to cover all $1$-inputs.
\end{remark}

\subsubsection*{The $O(n^3)$ Protocol for Spanning Tree}
As an example, we consider $X := \Sets$, $Y := \Trees$, and $f_n$ as defined in~\eqref{eq:def-sptree-f}.  So Alice will get a set $S\in\Sets$, and Bob will get a tree $T\in\Trees$, and they should both accept if $T$ is disconnected on~$S$.  We set
\begin{equation*}
  \Proofs := [n]^3 = \bigl\{ (u,t,v) \mid u,t,v \in [n] \bigr\}.
\end{equation*}
Figure~\ref{fig:n3-prot} describes Alice's and Bob's parts of the protocol.
\begin{figure}[thbp]
  \begin{flushleft}%
    \begin{tabular}[t]{|llc|}
      \hline
      \multicolumn{2}{|l}{\textbf{Alice:}}                                      &\\
      \cline{1-2}
      \No{1.}&Let $S \in \Sets$ be Alice's input.                        &\\
      \No{2.}&Let $(u,t,v) \in\Proofs$ be the triple sent by the Prover. &\\
      \No{3.}&\Code{If} $u,v \in S$, $t\not\in S$, \Code{Accept};                      &\\
      \No{4.}&\Code{Else} \Code{Reject}.                                               &\\
      \hline
    \end{tabular}
  \end{flushleft}
  \begin{flushright}
    \begin{tabular}[t]{|llc|}
      \hline
      \multicolumn{2}{|l}{\textbf{Bob:}}                                           &\\
      \cline{1-2}
      \No{1.}&Let $T \in \Trees$ be Bob's input.                                   &\\
      \No{2.}&Let $(u,t,v) \in\Proofs$ be the triple sent by the Prover.           &\\
      \No{3.}&\Code{If} $t$ is on the path in~$T$ between $u$ and~$v$, \Code{Accept};            &\\
      \No{4.}&\Code{Else} \Code{Reject}.                                                         &\\
      \hline
    \end{tabular}
  \end{flushright}
  \caption{$O(n^3)$ protocol for Spanning Tree (Alice-Bob part)}\label{fig:n3-prot}
\end{figure}

It is fairly obvious that the protocol computes~$f_n$, but we take the opportunity to make a definition that we will need later.  Given $(S,T)\in\Sets\times\Trees$, we say that a triple $(u,t,v)\in[n]^3$ is a \textit{witness for $f_n(S,T)=1$,} if the conditions in the protocol in Figure~\ref{fig:n3-prot} hold, i.e., if:
\begin{enumerate}[label=(\Alph*),nosep]
\item\label{def:witness:Alice} $u,v \in S$, $t\not\in S$; and
\item\label{def:witness:Bob} $t$ is on the path in~$T$ between $u$ and~$v$.
\end{enumerate}
The terminology makes sense: For ever $(S,T)\in\Sets\times\Trees$, we have $f_n(S,T)=1$, if and only if a witness for $f_n(S,T)=1$ exists.  Indeed, the sub-forest of~$T$ induced by~$S$ is disconnected, if and only if there is a pair of nodes $u,v$ such that the (unique) path between $u$ and~$v$ in~$T$ leads through a node~$t$ which is not in~$T$.

Hence, on the one hand, the Prover can accurately prove that to Alice and Bob that $f_n(S,T)=1$ by sending a witness for that fact as certificate.  On the other hand, if $f_n(S,T)=0$, no triple forms a witness, so by verifying the two conditions, Alice and Bob can refute the certificate sent by the Prover.  The key property of the conditions in the context of communication complexity is that Alice and Bob can verify their respective parts of the condition independently by only looking at their own input.

Note that the definition of witness is symmetric in $u,v$: $(u,t,v)$ is a witness for $f_n(S,T)=1$ iff $(v,t,u)$ is one.  Clearly, if $u=v$, $(u,t,v)$ is never a witness for anything.

\subsubsection*{Nonnegativity Inequalities}
For the sake of completeness, we sketch the argument why the $\binom{n}{2}$ nonnegativity inequalities~\eqref{eq:Edmonds:nng} can be omitted for the upper bound on $f_{P_n}$.

\begin{lemma}[Folklore]
  Let $X = X^0 \cup X^1$ with $X^0\cap X^1 =\emptyset$, let $f^i \colon X^i \times Y\to \{0,1\}$ be boolean functions, and let $f\colon X\times Y$ be defined through $f(x,y) = f^0(x,y)$, if $x\in X^0$ and $f(x,y) = f^1(x,y)$, if $x\in X^1$.  Then nondeterministic communication protocols for $f^0$ and~$f^1$ can be combined to form a protocol for~$f$ whose cost is at most $1$ plus the maximum of the costs of the protocols for $f^0$ and~$f^1$.
\end{lemma}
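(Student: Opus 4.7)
The plan is to construct the combined protocol by having the Prover first announce, with a single bit~$b\in\{0,1\}$, which side of the partition Alice's input lies in, and then append a certificate from the protocol for~$f^b$. Alice can verify the announced side because she knows~$x$; Bob just runs his part of the sub-protocol indicated by~$b$.

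More concretely, let $\Proofs^0$ and $\Proofs^1$ be the certificate sets of the given protocols for~$f^0$ and~$f^1$, of sizes $2^{c_0}$ and $2^{c_1}$ where $c_i$ is the cost of the $i$th protocol. I would take $\Proofs := \{0,1\}\times (\Proofs^0 \cup \Proofs^1)$, interpreting a certificate $(b,C)$ as: ``$x\in X^b$, and~$C$ is a certificate for $f^b(x,y)=1$.'' Upon receiving $(b,C)$, Alice first checks that $x\in X^b$ (rejecting otherwise), and then behaves as in the sub-protocol for~$f^b$ on certificate~$C$; Bob behaves as in the sub-protocol for~$f^b$ on~$C$. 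The total cost is at most $1 + \max(c_0,c_1)$ bits: one for~$b$, plus the cost of the invoked sub-protocol (one can be slightly more careful and pad the shorter certificate set so the Prover always sends exactly $\max(c_0,c_1)$ bits after~$b$).

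For correctness: if $f(x,y)=1$, let~$b$ be the unique index with $x\in X^b$; then $f^b(x,y)=1$, so by property~\ref{def:ndCP-correct:1} for the $b$-th protocol there is some $C\in\Proofs^b$ making Alice and Bob both accept, and $(b,C)$ is a certificate on which the combined protocol accepts. Conversely, if $f(x,y)=0$, consider any $(b,C)$ the Prover might send; either $x\notin X^b$ (Alice rejects), or $x\in X^b$, in which case $f^b(x,y)=f(x,y)=0$ and property~\ref{def:ndCP-correct:0} for the $b$-th protocol guarantees that Alice or Bob rejects.

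There is essentially no obstacle here; the only subtle point is that Alice's side-check $x\in X^b$ is purely local to Alice (it does not require communication with Bob), which is exactly what makes the construction legal in the communication model. The $+1$ in the cost is unavoidable in general because one bit is needed to communicate the partition side.
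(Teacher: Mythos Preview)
Your proof is correct and follows essentially the same approach as the paper's sketch: the Prover prepends a single bit indicating the partition side, Alice verifies this bit locally and rejects if it lies, and otherwise both parties run the indicated sub-protocol. You have simply fleshed out the correctness argument and the cost accounting in more detail than the paper does.
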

\begin{proof}[Sketch of Proof]
  To certify that $f(x,y)=1$, the Prover first sends one bit $i\in\{0,1\}$, signifying that $x\in X^i$, then he sends the certificate for that case.  Alice can check whether the Prover lies in the first bit, and rejects if he does, otherwise proceeds as in the corresponding protocol.  Bob follows the protocol indicated by~$i$.
\end{proof}

Combining the lemma with the fact that the nondeterministic communication complexity of $f\colon X\times Y\to\{0,1\}$ is at most $\log_2(\abs{X})$ (the Prover can send $x$), when adding the nonnegativity inequalities, we obtain a protocol which uses at most~$1$ more bit than the one described in the next section.  This bit is swalloed in the $O(1)$-term of Theorem~\ref{thm:main}.

\section{A Parsimonious Protocol}\label{sec:prot}
(From now on, we abbreviate $f_n$ to~$f$.)  
The protocol in the previous section requires the Prover to send one of $O(n^3)$ certificates.  To reduce that number, the fundamental intuition is to perform a ``lossy compression'' of the witness: some information is lost, but Alice and Bob can still make their decisions.  This only works if the certificate which the Prover sends on input $(S,T)$ in the case $f(S,T)=1$ are carefully chosen.

To describe the Prover's message, we need the following definition.  Consider $u,v\in [n]$ with $u<v$.  We say that $v$'s range is the set $R_v$ of numbers in $[n]$ which are closer to~$v$ than to~$u$, and $v$'s range is the set $R_v$ of numbers which are at least as close to~$u$ as they are to~$v$; in symbols:
\begin{align*}
  R_u &:= \{ j\in\NN \mid j \le (u+v)/2  \},
  \\
  R_v &:= \{ j\in\NN \mid (u+v)/2 < j \}.
\end{align*}

Now we are ready to describe the Prover's message.  Suppose Alice's input is the set~$S$ and Bob's input is the tree~$T$.  If $f(S,T)=1$, among all witnesses $(u,t,v)$ with $u<v$, the Prover chooses one which minimizes the expression
\begin{equation}\label{eq:minimize}
  \mu(u,t,v) := \abs{t-u}+\abs{t-v}.
\end{equation}
We call such a witness (satisfying $u<v$ and minimizing~\eqref{eq:minimize}) a \textit{valid} witness.  If $f(S,T)=1$, the Prover takes any valid witness and sends a quintuple $h(u,t,v)$ consisting of
\begin{itemize}[nosep]
\item the numbers $u$, and $v$;
\item one bit, $\pi$, indicating whether $t\in R_v$ (i.e., $1$, if that is the case and $0$ if it isn't);
\item one bit, $\delta$, indicating whether $t<u$, if $t\in R_u$, or $t<v$, if $t\in R_v$, respectively;
\item the number $d := \lfloor \log_2\abs{t-u} \rfloor$, if $u\in R_u$, or $d:=\lfloor \log_2\abs{t-v} \rfloor$, if $v \in R_v$, respectively.
\end{itemize}
Here is the set of certificates:
\begin{equation*}
  \Proofs
  := \Bigl\{ (u,v,\pi,\delta,d) \in [n]\times [n]\times \{ 0,1 \} \times \{0,1\} \times \{0,\dots,\lfloor \log_2 n \rfloor\}
  \Bigm|
  \text{$u<v$}
  \Bigr\}
\end{equation*}
Alice's and Bob's parts of the protocol are displayed in Figure~\ref{fig:n2lgn-prot}.  Since Alice and Bob do not communicate, the total cost of the protcol is
\begin{equation*}
  \log_2\abs{C} = 2\log_2 n + 2 + \log_2\log_2n + O(1).
\end{equation*}

\begin{figure}[thbp]
  \begin{flushleft}%
    \begin{tabular}[t]{|llc|}
      \hline
      \multicolumn{2}{|l}{\textbf{Alice:}}                                                &\\
      \cline{1-2}
      \No{1.}  &Let $S \in \Sets$ be Alice's input.                                           &\\
      \No{2.}  &Let $c:=(u,v,\pi,\delta,d) \in\Proofs$ be the certificate sent by the Prover. &\\
      \No{3.}  &\Code{If} $u \notin S$ or $v\notin S$: \Code{Reject}.                         &\\
      \No{4.}  &\Code{For\ All} $r\in[n]\setminus\{u,v\}$ with $h(u,r,v)=c$:                                  &\\
      {}       &\quad\quad\Code{If} $r\in S$: \Code{Reject}.                                  &\\
      \No{5.}  &\Code{Accept}.                                                                &\\
      \hline
    \end{tabular}
  \end{flushleft}%
  \begin{flushright}
    \begin{tabular}[t]{|llc|}
      \hline
      \multicolumn{2}{|l}{\textbf{Bob:}}                                                      &\\
      \cline{1-2}
      \No{1.}  &Let $T \in \Trees$ be Bob's input.                                                &\\
      \No{2.}  &Let $c:=(u,v,\pi,\delta,d) \in\Proofs$ be the certificate sent by the Prover.     &\\
      \No{3.}  &\Code{For\ All} $r\in[n]\setminus\{u,v\}$ with $h(u,r,v)=c$:                                      &\\
      {}       &\quad\quad\Code{If} $r$ is on the path in~$T$ between $u$ and~$v$: \Code{Accept}. &\\
      \No{4.}  &\Code{Reject}.                                                                    &\\
      \hline
    \end{tabular}
  \end{flushright}
  \caption{The parsimonious protocol for Spanning Tree (Alice-Bob part)}\label{fig:n2lgn-prot}
\end{figure}

If it \textsl{were not} for the rounding down, in~$d$, of the $\log_2$ of the distance of~$t$ to either $u$ or~$v$ (whichever is closer), the certificate data \textsl{would} allow to reconstruct~$t$ exactly: with $\tilde d := \log_2\abs{t-u}$, if $u\in R_u$, or $\tilde d := \log_2\abs{t-v}$, if $v \in R_v$, respectively, (i.e., $d = \lfloor \tilde d \rfloor$) we have
\begin{equation*}
  t =
  \begin{cases}
    u + (-1)^\delta\, 2^{\tilde d} &\text{ if $\pi=0$;} \\
    v + (-1)^\delta\, 2^{\tilde d} &\text{ otherwise.}
  \end{cases}
\end{equation*}
Sending the rounded-down~$d$ reduces the factor in front of the $\log_2 n$ in the cost of the protocol from $3$ to~$2$, but it clearly incurs a loss of information.  However, Alice and Bob can make decisions based on~$d$, in the way described in Figure~\ref{fig:n2lgn-prot}.  In the next section, we prove that their decisions are correct (in the sense that the nondeterministic communication protocol really computes~$f$).

\section{Proof of Correctness}\label{sec:proof}
We now prove the correctness of the protocol described in the previous seciton.

We first consider the condition~\ref{def:ndCP-correct:0} of the definition of a nondeterministic communication protocol computing a function.
\begin{lemma}
  Let $S\in\Sets$ be Alice's input set and $T\in\Trees$ be Bob's input tree.
  If Alice and Bob accept, then $f(S,T)=1$.
\end{lemma}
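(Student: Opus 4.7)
The plan is to observe that Alice's and Bob's acceptance conditions, taken together, directly produce a witness for $f(S,T)=1$ in the sense of the definition in Section~\ref{sec:basics}. So the proof is essentially a bookkeeping exercise, not a combinatorial argument; the work sits in condition \ref{def:ndCP-correct:1} (existence of a valid certificate when $f(S,T)=1$), which is \emph{not} what this lemma asks for.

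First, suppose that on certificate $c=(u,v,\pi,\delta,d)\in\Proofs$ both Alice and Bob accept. From Alice's Step~3 we immediately get that $u\in S$ and $v\in S$. Since Bob accepts, his Step~3 guarantees that there exists some $r\in[n]\setminus\{u,v\}$ with $h(u,r,v)=c$ such that $r$ lies on the (unique) path in~$T$ between $u$ and~$v$. For \emph{this} particular~$r$, Alice's Step~4 would have caused her to reject had $r$ been in~$S$; since she accepted, we must have $r\notin S$.

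Thus the triple $(u,r,v)$ satisfies both conditions~\ref{def:witness:Alice} and~\ref{def:witness:Bob} from Section~\ref{sec:basics}: $u,v\in S$ and $r\notin S$ (condition \ref{def:witness:Alice}), and $r$ lies on the $u$-$v$ path in~$T$ (condition \ref{def:witness:Bob}). In other words, $(u,r,v)$ is a witness for $f(S,T)=1$, and by the equivalence noted after Figure~\ref{fig:n3-prot} (the sub-forest of~$T$ induced by~$S$ is disconnected iff such a witness exists), we conclude $f(S,T)=1$.

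The only subtle point worth flagging is that Alice and Bob make their respective decisions about the \emph{same} value of~$r$, even though they never communicate it. The mechanism is that the ``lossy'' hash~$h(u,\place,v)$ defines a common set $\{r\in[n]\setminus\{u,v\}:h(u,r,v)=c\}$ determined solely by the certificate~$c$, and Alice's universal quantifier over this set combined with Bob's existential quantifier over the same set lets a single element do the work of a witness. No obstacle is expected here; the genuinely substantive content of the correctness proof will appear in the companion lemma establishing that a valid certificate exists whenever $f(S,T)=1$, where the choice of a \emph{minimizer} of $\mu$ in~\eqref{eq:minimize} will be essential to rule out spurious $r$'s lying in~$S$.
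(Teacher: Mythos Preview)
Your proof is correct and follows essentially the same line as the paper's: both arguments pick the~$r$ that makes Bob accept and then use Alice's universal check over the same $h$-preimage set to conclude $r\notin S$, yielding a witness $(u,r,v)$ for $f(S,T)=1$. Your explicit remark about the universal/existential interplay is a helpful clarification, but the underlying argument is the same.
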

\begin{proof}
  Suppose that Bob accepts.  That means that in the loop~\No{3} of his part of the protocol, he has found an $r_0\in[n]\setminus\{u,v\}$ such that $(u,r_0,v)$ satisfies~\ref{def:witness:Bob}.

  In Alice's loop~\No{4}, she has checked the~$r_0$ found by Bob.  If Alice has accepted, that means that this $r_0 \notin S$, i.e., $(u,r_0,v)$ also satisfies~ \ref{def:witness:Bob}.  In short, $(u,r_0,v)$ is a witness for $f(S,T)=1$.
\end{proof}

We now come to the condition~\ref{def:ndCP-correct:1} of the definition of a protocol computing a function.
Fundamentally, the property of~$f$ which makes the protocol work is the ``ubiquity'' of witnesses: If $f(S,T)=1$, ``many'' witnesses exist for that fact.  The precise property we need is the following.

\begin{lemma}[Triangle Lemma]
  Let $(S,T)\in\Sets\times \Trees$, and let $u,v,w\in S$, $x\notin S$.  If $(u,t,v)$ is a witness for $f(S,T)=1$, then so is at least one of $(v,t,w)$, $(w,t,u)$.
\end{lemma}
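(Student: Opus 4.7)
The plan is to reduce the statement to a single elementary fact about trees, namely the existence of a \emph{median} (or Steiner point): for any three vertices $a,b,c$ in a tree~$T$ there is a (unique) vertex $m \in [n]$ lying on each of the three pairwise paths, and moreover the $a$-$b$ path in~$T$ decomposes as the concatenation of the $a$-$m$ subpath and the $m$-$b$ subpath (and similarly for the other two pairs).

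Granted this, the proof is essentially a case split. First, I would observe that condition~\ref{def:witness:Alice} is already automatic for both candidate triples $(v,t,w)$ and $(w,t,u)$: by hypothesis $u,v,w\in S$ and $t\notin S$. Hence the task reduces to verifying condition~\ref{def:witness:Bob} for at least one of them, i.e.\ to showing that $t$ lies on the $v$-$w$ path in~$T$ or on the $w$-$u$ path in~$T$.

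Next, I would apply the median fact to $\{u,v,w\}$ and let $m$ be their median in~$T$. By hypothesis (condition~\ref{def:witness:Bob} for the witness $(u,t,v)$), the vertex $t$ lies on the $u$-$v$ path, which by the decomposition property is the concatenation of the $u$-$m$ and $m$-$v$ subpaths. In the first subcase, $t$ lies on the $u$-$m$ subpath, which is also a subpath of the $u$-$w$ path (again by the decomposition property, the $u$-$w$ path is the concatenation of the $u$-$m$ and $m$-$w$ subpaths); so $(w,t,u)$ satisfies~\ref{def:witness:Bob} and is a witness. The second subcase ($t$ on the $m$-$v$ subpath) is entirely symmetric and yields that $(v,t,w)$ is a witness.

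I do not foresee a real obstacle: the whole content is the median/Steiner-point property of trees, which is standard (take $m$ to be, e.g., the last vertex of the $u$-$v$ path that also lies on the $u$-$w$ path). The only mild care needed is to confirm that $t$ cannot escape both the $u$-$m$ and $m$-$v$ subpaths, which is immediate from the decomposition of the $u$-$v$ path.
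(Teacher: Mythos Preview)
Your proposal is correct. Both arguments dispose of condition~\ref{def:witness:Alice} immediately and then verify~\ref{def:witness:Bob} via a single structural fact about trees, but the decomposition chosen is different. The paper roots~$T$ at~$t$: since $(u,t,v)$ is a witness, $u$ and~$v$ descend from distinct children $s_u,s_v$ of~$t$, and~$w$ cannot descend from both, so the path from~$w$ to whichever of $u,v$ lies in the ``other'' subtree must pass through~$t$. You instead take the median~$m$ of $\{u,v,w\}$ and split the $u$--$v$ path at~$m$, then use that the $u$--$m$ segment is shared with the $u$--$w$ path (and symmetrically for $m$--$v$). Neither approach is materially shorter; yours invokes a named structural lemma (the Steiner/median point), while the paper's rooting argument is self-contained and perhaps slightly more direct since it is organized around~$t$ rather than around an auxiliary vertex~$m$. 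One small remark: the paper explicitly handles the degenerate case $\sabs{\{u,v,w\}}\le 2$ up front; your median argument still goes through there (the median collapses to the repeated vertex), but it would not hurt to say a word about it.
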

\begin{proof}
  Since the lemma is trivially true if $\abs{\{u,v,w\}} \le 2$, we assume that $u,v,w$ are all distinct.

  Property~\ref{def:witness:Alice} is clearly satisfied by all three triples.  As for property~\ref{def:witness:Bob}, suppose $(u,t,v)$ is a witness for $f(S,T)=1$, and consider the rooted tree which results from~$T$ by choosing~$t$ as the root.  If $(u,t,v)$ is a witness, $u$ and~$v$ are descendants of two different children $s_u$, $s_v$ of~$t$, so at least one of these two children, $s'$, is not an ancestor of~$w$.  If $s'=s_u$, then the path between~$w$ and~$u$ goes through~$t$, so $(w,t,u)$ is a witness for $f(S,T)=1$; if $s'=s_v$, then $(v,t,w)$ is a witness.
\end{proof}

\begin{lemma}
  Let $S\in\Sets$ be Alice's input set and $T\in\Trees$ be Bob's input tree, and let $(u,t,v)$ be a valid witness for $f(S,T)=1$.
  If the prover sends $h(u,t,v)$, then Alice and Bob both accept.
\end{lemma}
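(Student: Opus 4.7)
\medskip

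\noindent\textbf{Proof plan.}  There are two things to check:\ Bob's acceptance, which is easy, and Alice's acceptance, which is the heart of the matter.  For Bob, observe that $t$ itself is a candidate in his loop~\No 3: $t \in [n]\setminus\{u,v\}$ by property~\ref{def:witness:Alice} and $h(u,t,v)=c$ by the definition of~$c$.  Since $(u,t,v)$ is a witness, $t$ lies on the $u$--$v$ path in~$T$ by~\ref{def:witness:Bob}, so Bob accepts immediately.

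For Alice, line~\No 3 passes because $u,v\in S$ by~\ref{def:witness:Alice}.  The core task is to show that no $r\in S\setminus\{u,v\}$ satisfies $h(u,r,v)=c$.  Suppose for contradiction that some such $r$ exists.  Decoding~$c$ tells us exactly this:\ the bit~$\pi$ puts $r$ in the same range ($R_u$ or $R_v$) as~$t$, the bit~$\delta$ places $r$ on the same side of the anchor endpoint (call it~$w$, where $w\in\{u,v\}$) as~$t$, and equality of~$d$ means $|r-w|$ and $|t-w|$ lie in the same dyadic interval $[2^d,2^{d+1})$.  Consequently $(r-w)$ and $(t-w)$ have the same sign and magnitudes in $[2^d,2^{d+1})$, so
\begin{equation*}
  |r-t| < 2^d \le |t-w|.
\end{equation*}
Moreover, because $t$ lies in~$w$'s range, $|t-w|\le |t-w'|$ where $w'$ is the other element of $\{u,v\}$.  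In short, $|r-t|<|t-u|$ and $|r-t|<|t-v|$.

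Now apply the Triangle Lemma to $u,v,r\in S$ and $t\notin S$:\ since $(u,t,v)$ is a witness, at least one of $(v,t,r)$, $(r,t,u)$ is a witness too.  Reorder the endpoints so that the smaller one comes first, turning the chosen triple into a valid candidate $(a,t,b)$ with $a<b$.  In either case the new pair $\{a,b\}\subseteq\{u,v,r\}$ contains~$r$ and exactly one of $u,v$, so
\begin{equation*}
  \mu(a,t,b) = |t-r| + |t-w''| \quad\text{for some $w''\in\{u,v\}$,}
\end{equation*}
and since $|t-r|<|t-u|$ and $|t-r|<|t-v|$ we obtain $\mu(a,t,b) < |t-u|+|t-v| = \mu(u,t,v)$.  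This contradicts the choice of $(u,t,v)$ as a \emph{valid} witness, i.e., one minimizing~\eqref{eq:minimize}.  Hence no $r\in S\setminus\{u,v\}$ with $h(u,r,v)=c$ exists, and Alice accepts.

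The main obstacle is the inequality chain $|r-t|<|t-u|\wedge|r-t|<|t-v|$ that extracts geometric information from the lossy encoding~$d$; one has to be careful to handle both cases $t\in R_u$ and $t\in R_v$ (they are symmetric), to use $t\ne r$ (which follows from $t\notin S\ni r$), and to verify that the Triangle Lemma's output can always be re-sorted into a valid witness of the required form.
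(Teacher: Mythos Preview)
Your proposal is correct and follows essentially the same approach as the paper: decode the certificate to conclude that $r$ and $t$ lie in the same dyadic window on the same side of the same anchor, apply the Triangle Lemma, and derive a witness with strictly smaller $\mu$, contradicting validity. Your argument is in fact slightly more streamlined than the paper's: by establishing $|r-t|<|t-w|\le |t-w'|$ once (so that $|r-t|$ undercuts both $|t-u|$ and $|t-v|$), you avoid the paper's case split on which of the two triples output by the Triangle Lemma is the witness.
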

\begin{proof}
  Let us start with Bob:  Searching through all~$r$ in the loop~\No{3}, he will encounter~$t$ and accept.

  As for Alice, we have to prove that of all the $r\in[n]\setminus\{u,v\}$ with $h(u,r,v)=h(u,t,v)$, none is in~$S$.  Here, we use the triangle lemma, and the minimality of the expression~\eqref{eq:minimize}.

  For a proof by contradiction, assume that $r\in S$ and $h(u,r,v)=h(u,t,v)$.  Since the $\pi$-entries of $h(u,r,v)$ and $h(u,t,v)$ are the same, $r$ and~$t$ are both either in $R_u$ (i.e., at least as close to $u$ as to~$v$) or in $R_v$ (i.e., closer to~$v$ than to~$u$).  Let us consider the case $\pi=0$, which indicates $r,t\in R_u$ --- the other case is similar.

  Since the $\delta$-entries of $h(u,r,v)$ and $h(u,t,v)$ are the same, $r$ and~$t$ are both either to the left of~$u$ or to the right of~$u$.  Let us assume $\delta=0$, which indicates $u < r,t$ --- the other case is similar.

  By the triangle lemma, one of $(r,t,v)$ or $(u,t,r)$ is a witness for $f(S,T)=1$.  Let us consider the case that $(r,t,v)$ is such a witness, and consider $\mu(r,t,v)$.  Since the right-most entry of the triple has not changed, the second summand in the expression~\eqref{eq:minimize} for $\mu(r,t,v)$ is the same as in $\mu(u,t,v)$.  As for the left summand, let~$d$ denote the common rightmost entry of $h(u,r,v)$ and $h(u,t,v)$.   We have
  \begin{equation}\label{eq:range-r,t}
    u+2^d \le r,t < u+2\cdot 2^d,
  \end{equation}
  and hence
  \begin{equation}\label{eq:strict-ieq-r,t,u}
    \abs{r-t} < 2^d \le \abs{t-u}.
  \end{equation}
  This means that $\mu(r,t,v) < \mu(u,t,v)$, and contradicts the condition that $(u,t,v)$ is a \textsl{valid} witness.

  In the case that $(u,t,r)$ is a witness for $f(S,T)=1$ instead of $(r,t,v)$, we also have~\eqref{eq:range-r,t}, also implying~\eqref{eq:strict-ieq-r,t,u}.  This time, the left summand in in the expression~\eqref{eq:minimize} for $\mu(u,t,r)$ is the same as the one in $\mu(u,t,v)$, but for the right summand for $\mu(u,t,r)$ is
  \begin{equation}\label{eq:oweinfosdif}
    \abs{r-t} < \abs{t-u} \le \abs{t-v},
  \end{equation}
  where the second inequality follows from the assumption (case) that the $\pi$-entry in $h(u,t,v)$ is~$0$, i.e., $t$ is closer to~$u$ than to~$v$.  The right-hand-side term in~\eqref{eq:oweinfosdif} is the second summand for $\mu(u,t,v)$.  Hence, in this case we also arrive at a statement contradicting the validity of the witness $(u,t,v)$.
\end{proof}

This concludes the proof of Theorem~\ref{thm:main}.

\section{Conclusions}\label{sec:conclusion}
Despite unrelenting interest in the problem over the last 6 years (e.g., \cite{Beasley-Klauck-Lee-Theis:Dagstuhl:13,Klauck-Lee-Theis-Thomas:Dagstuhl:15}, Carg\`ese workshop \textit{Extended Formulations II} (2014)), the extension complexity of the Spanning Tree polytope seems to be as open as ever.

To the authors, it appears as if even the slightest improvement of either the upper bound (e.g., $n^3/\log\log n$) or the lower bound (e.g., $n^2\log\log n$) to the extension complexity of the Spanning Tree polytope could be a breakthrough.  While Theorem~\ref{thm:main} determines the nondeterministic communication complexity lower bound of the Spanning Tree polytope up to a multiplicative $O(\log n)$ term, it is still conceivable that that method could yield a lower bound of $\Omega(n^2\log n)$.  However, it appears more promising to focus on the non-combinatorial bounds (e.g., \cite{Rothvoss14}).

%
%
%

\subsubsection*{Acknowledgments}
This research was supported by the Estonian Research Council, ETAG (\textit{Eesti Teadusagentuur}), through PUT Exploratory Grant \#620.  We also gratefully acknowledge funding by the European Regional Development Fund through the Estonian Center of Excellence in Computer Science, EXCS.


\end{document}